\documentclass[pdflatex,sn-mathphys-num]{sn-jnl}

\usepackage{graphicx}%
\usepackage{multirow}%
\usepackage{amsmath,amssymb,amsfonts}%
\usepackage{amsthm}%
\usepackage{mathrsfs}%
\usepackage[title]{appendix}%
\usepackage{xcolor}%
\usepackage{textcomp}%
\usepackage{manyfoot}%
\usepackage{booktabs}%
\usepackage{algorithm}%
\usepackage{algorithmicx}%
\usepackage{algpseudocode}%
\usepackage{listings}%
\usepackage{array}
\usepackage{url}

\renewcommand{\arraystretch}{2.}

\newtheorem{proposition}{Proposition}

\begin{document}
\title[Article Title]{Optimising Trotter-Suzuki Simulations of Markovian Open Quantum Systems via Classical Search}

\author[1,3]{\fnm{S. M.} \sur{Pillay}}\email{shivani.pillay@nithecs.ac.za}

\author[2,3]{\fnm{I. J.} \sur{David}}\email{ian.david@nithecs.ac.za}

\author*[2,3]{\fnm{I. }\sur{Sinayskiy}}\email{sinayskiy@ukzn.ac.za}

\author[4,5]{\fnm{F. }\sur{Petruccione}}\email{petruccione@sun.ac.za}

\affil[1]{Discipline of Computer Science, School of Agriculture and Science, University of KwaZulu-Natal, Durban, 4001 South Africa}

\affil[2]{Discipline of Physics, School of Agriculture and Science, University of KwaZulu-Natal, Durban 4001, South Africa}

\affil[3]{National Institute for Theoretical and Computational Sciences (NITheCS), KZN Node, University of KwaZulu-Natal, Durban 4001, South Africa}

\affil[4]{School of Data Science and Computational Thinking and Department of Physics, Stellenbosch University, Stellenbosch 7604, South Africa}

\affil[5]{National Institute for Theoretical and Computational Sciences (NITheCS), Stellenbosch, South Africa}

\abstract{Simulating an open quantum system on a digital quantum computer often involves the use of Trotter-Suzuki (TS) Product Formulas (PF) to approximate the system's time evolution. Precise estimates for the required number of Trotter steps (and hence the overall gate count) can be crucial for minimising the computational cost of these methods. Building on established theoretical guarantees, we derive analytic bounds for the First- and Second-Order Deterministic and Randomised TS-PF, directly relating the number of Trotter steps to the model parameters, evolution time and precision. These bounds enable concrete resource estimation for each method. We then present a computationally efficient classical algorithm that uses diamond norm estimates of individual Liouvillian terms and a binary search to significantly reduce the Trotter steps required for a target precision. Our numerical results on two prototypical models—an XX-Spin Chain with boundary driving and local dephasing, and a Transverse-Field Ising Model—show that the theoretical (analytic) bounds are often overly conservative, whereas the empirical (optimised) bounds yield a significantly smaller number of Trotter steps for the same precision. Among the methods investigated, the Second-Order Randomised TS-PF redtypically achieves the lowest resource demands, especially for larger systems. These findings emphasise the significance of empirical bounding strategies in achieving more resource-efficient simulations of Markovian open quantum systems.}

\keywords{Quantum Simulation, Open Quantum Systems, Trotter-Suzuki Product Formulas}

\maketitle

\section{Introduction}\label{sec1}
Simulating open quantum systems (OQS) enables us to understand and predict the behaviour of quantum systems interacting with their environment. This is crucial because real-world systems are rarely fully isolated \cite{breuer2002theory}. Quantum algorithms \cite{lloyd1996universal,childs2018toward,childs2021theory} offer a particularly promising avenue for such simulations by exploiting quantum hardware to circumvent the exponential scaling that plagues classical methods \cite{cleve2016efficient,kamakari2022digital,pocrnic2023quantum}.

In this work, we focus on four methods for the quantum simulation of OQS: the well-established First- and Second-Order Deterministic Trotter-Suzuki (TS) Product Formulas (PF) \cite{suzuki1990fractal,sweke2015universal} and the recently developed First- and Second-Order Randomised TS-PF \cite{david2024faster}. These approaches partition the total evolution time $t$ into $N$ Trotter steps. The TS-PF then approximates the evolution for a small time step $t/N$ using a sequence of simpler exponentials that can be efficiently implemented on a quantum computer. Repeatedly applying this short-time approximation $N$ times to the system's initial state approximates the system's overall time evolution. 

These TS-PF methods come with theoretical bounds that link the precision of the approximation $\epsilon$ to model-specific parameters, the total evolution time and $N$. From these guarantees, asymptotic upper bounds have been derived that describe how $N$ scales with model-specific parameters, the total evolution time and target precision $\epsilon$ \cite{david2024faster}. However, these asymptotic bounds do not yield concrete estimates for $N$. To address this, we derive analytic bounds for each method, which provide explicit estimates for the number of Trotter steps $N$ required to simulate a given model up to a target precision. The gate complexity, which in this work is defined as $N$ multiplied by the number of simpler exponentials that approximate the evolution for time $t/N$, can then be estimated from this analytic bound on $N$. 

Previous studies on TS-PF for simulating closed quantum systems compared analytic and empirical bounds on the number of Trotter steps $N$ required to simulate a Heisenberg model up to a chosen precision \cite{childs2018toward,childs2019faster}. These studies found that the analytic bounds significantly overestimate the $N$ needed to achieve a given target precision. In contrast, the empirical bounds determined through a binary search provided a lower, more accurate estimate of $N$. 

Inspired by this work, we present a classical algorithm for finding the minimal number of Trotter steps $N$ required by the First- and Second-Order Deterministic and Randomised TS-PF to simulate a given model in OQS up to a target precision. This algorithm makes use of the analytic bounds that we have derived and a binary search. 
We then apply this algorithm to find the minimum $N$ required by each method to simulate two prototypical models in open quantum systems. In this way, we establish empirical bounds on $N$ that highlight the extent to which the derived analytic bounds significantly overestimate the actual number of Trotter steps needed.

Using these empirical bounds, we provide a lower, more accurate estimate of the gate complexity required by each method. This leads to more resource-efficient simulations in practice. In addition, we present a comparative analysis of the gate complexities of each method, incorporating model-specific parameters for the two prototypical open quantum systems considered. Overall, these insights pave the way for more resource-efficient quantum simulations of open quantum systems in practice.

\section{Overview of Simulation Methods}\label{sec2}
The state of a Markovian OQS, described by a density matrix $\rho(t)$, evolves according to the Gorini-Kossakowski-Sudarshan-Lindblad (GKSL) equation as \cite{gorini1976completely,lindblad1976generators}

\begin{equation}
    \frac{d}{dt} \rho =  -i[H,\rho] + \sum_{k}\gamma_k(L_k \rho L_k^\dagger - \frac{1}{2}\{L_k^\dagger L_k,\rho\})
\label{gksl_eq}
\end{equation}

\noindent where the Hamiltonian $H=\sum_{s}H_{s}$, the jump operators $\{L_k\}$  and the decay rates $\{\gamma_k\}$ will be specified by the model of the Markovian OQS. For convenience, we  write equation (\ref{gksl_eq}) as  

\begin{equation}
    \frac{d}{dt} \rho = \mathcal{L}(\rho)
\label{shortened_gksl_eq}
\end{equation}

\noindent where the Liouvillian generator $\mathcal{L}$ is defined as 

\begin{equation}
\mathcal{L} = \sum_{k=1}^M \mathcal{L}_k
\label{definition_L}
\end{equation}

\noindent with terms having the form $\mathcal{L}_{k}(\rho) = -i[H_{k},\rho]$ or $\mathcal{L}_{k}(\rho) = \gamma_k(L_k \rho L_k^\dagger - \frac{1}{2}\{L_k^\dagger L_k,\rho\})$. The way we write this generator as a sum is chosen so that we can easily compute or approximate the exponential of each term in (\ref{definition_L}). While this decomposition enables the use of tailored simulation techniques for specific classes of terms, identifying an optimal or effective splitting requires careful consideration. In practice, one aims to partition the generator in a way that best leverages known simulation strategies, as demonstrated in \cite{hagan2023composite}.

The solution to equation (\ref{shortened_gksl_eq}) is 

\begin{equation}
\rho(t) = \Lambda(t) \rho(0),
\label{gksl_sol}
\end{equation}

\noindent where the channel $\Lambda(t)$ is defined as $\Lambda(t) = \mathrm{exp}(t\mathcal{L})= \mathrm{exp}(t\sum_{k=1}^M \mathcal{L}_k)$ and describes the total evolution of the system. Several methods exist to simulate the evolution $\Lambda(t)$ on a quantum computer. This technique is referred to as digital quantum simulation. 

Formally, the goal of digital quantum simulation is to construct an approximation $\tilde{\Lambda} (t)$ of the evolution $\Lambda(t) = \mathrm{exp}(t\sum_{k=1}^M \mathcal{L}_k)$ such that for a given Liouvillian generator $\mathcal{L}$, simulation time $t\ge0$, precision $\epsilon >0$ and distance measure $\mathrm{dist}(\cdot,\cdot)$ 

\begin{equation}
\mathrm{dist}(\tilde{\Lambda} (t),\Lambda(t)) \le \epsilon
\end{equation}

\noindent and $\tilde{\Lambda} (t)$ can be efficiently implemented on a quantum computer. 

In this work, we focus on four methods for the digital quantum simulation of OQS: the well-established First- and Second-Order Deterministic TS-PF and the recently derived First- and Second-Order Randomised TS-PF \cite{david2024faster}. As their names suggest, these methods make use of TS-PF to define a channel $\tilde{\Lambda}(\tau)$ that approximates the evolution for a small time step $\tau = \frac{t}{N}$. The total evolution can then be approximated as 

\begin{equation}
\Lambda(t) \approx (\tilde{\Lambda}(\tau))^N. 
\end{equation}

\noindent The specific choice of $\tilde{\Lambda}(\tau)$ determines the accuracy and efficiency of the simulation. The Deterministic TS-PF define a fixed ordering of exponentials in $\tilde{\Lambda}(\tau)$. The First Order Deterministic TS-PF approximates the evolution for the small time step $\tau$ as 

\begin{equation}
\tilde{\Lambda}_1^{\mathrm{det}}(\tau) = \prod_{k=1}^M \mathrm{exp}(\tau\mathcal{L}_k),      
\end{equation}

\noindent while the Second Order Deterministic TS-PF improves the precision by approximating the evolution for the small time step $\tau$ as  

\begin{equation}
\tilde{\Lambda}_2^{\mathrm{det}}(\tau) = \prod_{k=1}^M \mathrm{exp}(\frac{\tau}{2}\mathcal{L}_k) \prod_{k'=M}^1 \mathrm{exp}(\frac{\tau}{2}\mathcal{L}_{k'}).     
\end{equation}

In contrast, the Randomised TS-PF use sampling techniques to determine the ordering of the exponentials in $\tilde{\Lambda}(\tau)$, which is necessary due to the introduction of convex sums in the product formulas. The First Order Randomised TS-PF approximates the evolution for the small time step $\tau$ as

\begin{equation}
\tilde{\Lambda}_1^{\mathrm{ran}}(\tau) = \frac{1}{2} \prod_{k=1}^M \mathrm{exp}(\tau\mathcal{L}_k) + \frac{1}{2} \prod_{k'=M}^1 \mathrm{exp}(\tau\mathcal{L}_{k'}),
\end{equation}

\noindent while the Second Order Randomised TS-PF approximates the evolution for the small time step $\tau$ as  

\begin{equation}
\tilde{\Lambda}_2^{\mathrm{ran}}(\tau) = \frac{1}{M!} \sum_{\sigma \in \mathrm{Sym}(M)} \tilde{\Lambda}_2^{\sigma}(\tau)
\end{equation}

\noindent where $\sigma$ is a permutation in $\mathrm{Sym}(M)$: the group of all permutations of elements in the set $\{1,\dots,M\}$ and $\tilde{\Lambda}_2^{\sigma}(\tau) = \prod_{k=1}^M \mathrm{exp}(\frac{\tau}{2}\mathcal{L}_{\sigma(k)}) \prod_{k'=M}^1 \mathrm{exp}(\frac{\tau}{2}\mathcal{L}_{\sigma(k')})$. Interestingly, the Second Order Randomised TS-PF represents a convex combination of Second Order Deterministic TS-PF, where each term in the sum corresponds to a deterministic formula with an ordering of exponentials defined by the permutation $\sigma$. 

All of these methods have theoretical guarantees that ensure 

\begin{equation}
    \| \Lambda(t)-\big(\tilde{\Lambda}(\tau)\big)^N\|_\diamond \le \hat\epsilon(t,\lambda,N,M) 
\end{equation}

\noindent where $\|\cdot\|_\diamond$ is the diamond norm, $\hat\epsilon$ is the precision function. This allows us to relate a chosen precision $\epsilon$ to $t$, $\lambda := \mathrm{max} \|\mathcal{L}_k\|_\diamond$, $M$ and $N$ as 

\begin{equation}
    \hat\epsilon(t,\lambda,N,M) \le \epsilon.
\label{epsilon_func_and_epsilon}
\end{equation}

It is important to note that, in practice, the implementation of randomised formulas involves classical random sampling or quantum forking \cite{david2024faster}. However, since the randomised formulas can be viewed as convex mixtures of deterministic product formulas (PFs), we are able to find the precision function $\hat\epsilon$ without requiring any statistical analysis. This is because the sampling is performed only during the actual simulation of a given generator and does not impact the theoretical analysis of the precision.

\noindent Table \ref{tab:ts_formulas} shows the specific precision functions for each method.

\begin{table}[h!]
\centering
\renewcommand{\arraystretch}{2} 
\caption{Error functions and their corresponding TS-PF, as well as the formulas for how one computes the gate complexity in terms of the number of Trotter steps $N$. This is usually just the number of exponentials in the product formula times $N$. The deterministic error functions are sourced from \cite{suzuki1990fractal} while randomised error functions are sourced from \cite{david2024faster}.}
\begin{tabular}{|c|c|c|}
\hline
\textbf{TS-PF} & \textbf{Precision Function} & \textbf{Gate Complexity} \\ 
\hline
$\tilde{\Lambda}_{1}^{\mathrm{det}}(t/N)^{N}$ & $\dfrac{(t\lambda M)^{2}}{N}\exp\left(\dfrac{t\lambda M}{N}\right)$ & $MN$ \\[15pt] 
$\tilde{\Lambda}_{1}^{\mathrm{ran}}(t/N)^{N}$ & $\dfrac{(t\lambda M)^{3}}{3N^{2}}\exp\left(\dfrac{t\lambda M}{N}\right)$ & $MN$ \\[15pt]
$\tilde{\Lambda}_{2}^{\mathrm{det}}(t/N)^{N}$ & $\dfrac{(t\lambda M)^{3}}{3N^{2}}\exp\left(\dfrac{t\lambda M}{N}\right)$ & $2MN$\\[15pt]
$\tilde{\Lambda}_{2}^{\mathrm{ran}}(t/N)^{N}$ & $\dfrac{(t\lambda)^{3}M^{2}}{N^{2}}\exp\left(\dfrac{t\lambda M}{N}\right)$& $2MN$ \\[5pt]
\hline
\end{tabular}

\label{tab:ts_formulas}
\end{table}

\section{Derivation of Analytic Bounds on the Number of Trotter Steps}\label{sec_analytic_bounds}

Similarly to \cite{childs2018toward}, we derive analytic bounds on the number of Trotter steps needed for each method. The following propositions define the analytic upper bounds on $N$ that we have derived for the First- and Second-Order Deterministic and Randomised TS-PF. These bounds allow us to directly relate $N$ to $t$, $\lambda := \max_{k}\{ \|\mathcal{L}_k\|_\diamond\}$, $M$ and the chosen precision $\epsilon$. To establish each bound, we apply equation (\ref{epsilon_func_and_epsilon}) which relates the precision $\epsilon$ to the precision function $\hat\epsilon(t,\lambda,N,M)$. For the Deterministic and Randomized TS-PF, we employ the specific forms of $\hat\epsilon(t,\lambda,N,M)$ derived in \cite{suzuki1990fractal} and \cite{david2024faster}, respectively. We then control the exponential factor in the precision function. A detailed proof is provided for the First-Order Deterministic TS-PF while the other proofs, which follow the same logic, are omitted. 

\begin{proposition}[First-Order Deterministic TS-PF]
Let $t \geq 0$ be the total evolution time, $N \in \mathbb{N}$ be the number of Trotter steps, and $\epsilon > 0$ be the chosen precision. Given the First-Order Deterministic TS-PF,
\begin{align}
\tilde{\Lambda}_{1}^{\text{det}}(\tau) = \prod_{k=1}^{M} e^{\tau \mathcal{L}_{k}},
\end{align}
where $\tau \geq 0$, the First-Order Deterministic analytic bound on $N$ is given by
\begin{align}
N_{1,\text{det}}^{\mathrm{analytic}} := \left\lceil \max \left\{ t \lambda M, \frac{e (t \lambda M)^{2}}{\epsilon} \right\} \right\rceil.
\end{align}
\end{proposition}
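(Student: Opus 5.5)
We rely on the precision function for the First-Order Deterministic TS-PF in Table~\ref{tab:ts_formulas}. For any $N$,
\begin{align}
\| \Lambda(t)-\tilde{\Lambda}_{1}^{\text{det}}(t/N)^{N}\|_\diamond \le \hat\epsilon(t,\lambda,N,M) = \frac{(t\lambda M)^{2}}{N}\exp\left(\frac{t\lambda M}{N}\right).
\end{align}
By equation~(\ref{epsilon_func_and_epsilon}), it therefore suffices to show that every $N \ge N_{1,\text{det}}^{\mathrm{analytic}}$, and in particular $N = N_{1,\text{det}}^{\mathrm{analytic}}$ itself, satisfies $\hat\epsilon(t,\lambda,N,M)\le\epsilon$. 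The proof splits the two competing dependencies on $N$, namely the polynomial prefactor and the exponential factor, and controls each with one branch of the maximum.

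First, we tame the exponential. Since $N \ge \lceil t\lambda M\rceil \ge t\lambda M$, we have $t\lambda M/N \le 1$. As $\exp$ is monotone, this gives $\exp(t\lambda M/N)\le e$. We should treat the degenerate case $t\lambda M = 0$ separately: there $\hat\epsilon = 0 \le \epsilon$ trivially, and we require $N\ge 1$ since $N\in\mathbb{N}$. This yields the intermediate bound
\begin{align}
\hat\epsilon(t,\lambda,N,M) \le \frac{e\,(t\lambda M)^{2}}{N}.
\end{align}

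Second, we use the other branch of the maximum. Since $N \ge e(t\lambda M)^2/\epsilon$, we get $e(t\lambda M)^2/N \le \epsilon$. Chaining the two inequalities gives $\hat\epsilon \le \epsilon$, so the diamond-norm error of $(\tilde{\Lambda}_1^{\text{det}}(t/N))^N$ is at most $\epsilon$. Taking the ceiling preserves both lower bounds, since $\lceil x\rceil \ge x$ and the ceiling of the maximum dominates each argument. This shows $N_{1,\text{det}}^{\mathrm{analytic}}$ is a valid integer choice of Trotter steps. Monotonicity of $\hat\epsilon$ in $N$ is clear, because both factors decrease in $N$, so any larger $N$ also works.

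There is no deep obstacle here. The only point requiring care is the decoupling step: the condition $\hat\epsilon\le\epsilon$ cannot be solved for $N$ in closed form because of the exponential, which is why we impose the auxiliary constraint $N\ge t\lambda M$ to bound the exponential by the constant $e$. The price is a bound that is sufficient but not tight. We would remark that this is the source of the conservativeness later exposed by the binary search, and that the same template (bound $\exp(t\lambda M/N)\le e$, then solve the polynomial inequality, e.g.\ $N\ge\sqrt{e(t\lambda M)^3/(3\epsilon)}$ for second order) gives the remaining propositions.
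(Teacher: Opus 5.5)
Your proof is correct and follows the paper's argument: you impose $N \ge t\lambda M$ to bound the exponential factor by $e$, then solve $e(t\lambda M)^2/N \le \epsilon$ for $N$, and take the ceiling of the maximum of the two constraints. You also argue explicitly in the sufficiency direction, note monotonicity of $\hat\epsilon$ in $N$, and flag the degenerate case $t\lambda M = 0$ (where the formula returns $0\notin\mathbb{N}$); these are small refinements the paper leaves implicit.
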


\begin{proof}
Using the precision function from Table \ref{tab:ts_formulas} and choosing a precision $\epsilon>0$, we have the following bound:
\begin{align}
\label{first_ord_det_error}
    \|\Lambda(t)-\tilde{\Lambda}(t/N)^{N}\|_{\diamond}\leq \frac{(t\lambda M)^{2}}{N}\exp\left(\dfrac{t\lambda M}{N}\right)\leq \epsilon.
\end{align}
We need to control the exponential factor in the precision function to obtain an analytic bound for $N$ in terms of $\lambda$, $t$, $M$ and $\epsilon$. We start by imposing the constraint that
\begin{align}
\label{N_1}
    \frac{t\lambda M}{N}\leq 1. 
\end{align}
This implies that we must choose $N \geq t\lambda M$, which in turn implies that the exponential term is upper-bounded by $e$. Then, in the inequality (\ref{first_ord_det_error}) we have
\begin{align}
\label{N_2}
    \frac{e(t\lambda M)^{2}}{\epsilon}& \leq N.
\end{align}
Finally, since $N$ should be a natural number and needs to satisfy both inequalities (\ref{N_1}) and (\ref{N_2}), we have 
\begin{align}
    N_{1,\text{det}}^{\text{analytic}} =\left \lceil \max\left\{t\lambda M,  \frac{e(t\lambda M)^{2}}{\epsilon} \right\}\right\rceil.
\end{align}
\end{proof}

\begin{proposition}[First-Order Randomised TS-PF]
Let $t \geq 0$ be the total evolution time, $N \in \mathbb{N}$ be the number of Trotter steps, and $\epsilon > 0$ be the chosen precision. Given the First-Order Randomised TS-PF,
\begin{align}
\tilde{\Lambda}_{1}^{\text{ran}}(\tau) = \frac{1}{2}\prod_{k=1}^{M} e^{\tau \mathcal{L}_{k}}+\frac{1}{2}\prod_{k'=M}^{1} e^{\tau \mathcal{L}_{k'}},
\end{align}
where $\tau \geq 0$, the First-Order Randomised analytic bound on $N$ is given by
\begin{align}
N_{1,\text{ran}}^{\mathrm{analytic}} := \left\lceil \max \left\{ t \lambda M, \sqrt{\frac{e (t \lambda M)^{3}}{3\epsilon}} \right\} \right\rceil.
\end{align}
\end{proposition}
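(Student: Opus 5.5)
The argument follows the one for the First-Order Deterministic TS-PF, using the precision function for $\tilde{\Lambda}_{1}^{\mathrm{ran}}$ from Table \ref{tab:ts_formulas}. Equation (\ref{epsilon_func_and_epsilon}) gives the sufficient condition
\begin{align}
\|\Lambda(t)-\tilde{\Lambda}_{1}^{\mathrm{ran}}(t/N)^{N}\|_{\diamond}\leq \frac{(t\lambda M)^{3}}{3N^{2}}\exp\left(\frac{t\lambda M}{N}\right)\leq \epsilon,
\end{align}
and the goal is to find an explicit $N$ that forces the middle expression below $\epsilon$. The only obstruction to solving for $N$ directly is the exponential factor, so it is decoupled first, exactly as in the deterministic case.

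First, impose the constraint $t\lambda M/N \leq 1$, that is, $N \geq t\lambda M$. Since $\exp$ is increasing, the exponential factor is then at most $e$. Under this constraint it suffices to require
\begin{align}
\frac{e(t\lambda M)^{3}}{3N^{2}} \leq \epsilon .
\end{align}
Because $N>0$ and $\epsilon>0$, this rearranges to $N^{2}\geq e(t\lambda M)^{3}/(3\epsilon)$. Taking positive square roots gives
\begin{align}
N \geq \sqrt{\frac{e(t\lambda M)^{3}}{3\epsilon}}.
\end{align}

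Next, observe that any $N$ satisfying both inequalities makes the full precision function at most $\epsilon$. If $N$ is at least the maximum of the two thresholds, the first inequality bounds the exponential by $e$, and then the second gives the required bound. Since $N$ must be a natural number, take the ceiling of this maximum, which yields $N_{1,\mathrm{ran}}^{\mathrm{analytic}}$ as stated.

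One edge case needs checking. When $t\lambda M=0$, the maximum is $0$, so the formula returns $N=0$. In that case the error is trivially zero, and one can interpret the result with $N\geq 1$ if a positive number of steps is required.

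There is no real obstacle here. The only point needing care is that the bound is sufficient rather than optimal: the choice $t\lambda M/N\le 1$ is a convenient, not necessary, way to control the exponential. This is also the reason the empirical search described later can do better.
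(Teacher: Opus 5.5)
Your proposal is correct and follows essentially the same route as the paper, which omits this proof and says it mirrors the First-Order Deterministic case. You impose $N \ge t\lambda M$ to bound the exponential factor by $e$, solve $e(t\lambda M)^{3}/(3N^{2}) \le \epsilon$ for $N$, and take the ceiling of the maximum of the two thresholds; your remark on the degenerate case $t\lambda M = 0$ is a small extra the paper does not address.
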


\begin{proposition}[Second-Order Deterministic TS-PF]
Let $t \geq 0$ be the total evolution time, $N \in \mathbb{N}$ be the number of Trotter steps, and $\epsilon > 0$ be the chosen precision. Given the Second-Order Deterministic TS-PF,
\begin{align}
\tilde{\Lambda}_{2}^{\text{det}}(\tau) = \prod_{k=1}^{M} e^{\frac{\tau}{2} \mathcal{L}_{k}}\prod_{k'=M}^{1} e^{\frac{\tau}{2} \mathcal{L}_{k'}},
\end{align}
where $\tau \geq 0$, the Second Order Deterministic analytic bound on $N$ is given by
\begin{align}
N_{2,\text{det}}^{\mathrm{analytic}} := \left\lceil \max \left\{ t \lambda M, \sqrt{\frac{e (t \lambda M)^{3}}{3\epsilon}} \right\} \right\rceil.
\end{align}
\end{proposition}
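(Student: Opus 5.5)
The argument follows the proof of the First-Order Deterministic case, with the precision function for the Second-Order Deterministic TS-PF taken from Table \ref{tab:ts_formulas}. Applying equation (\ref{epsilon_func_and_epsilon}), it suffices to choose $N$ so that
\begin{align*}
\|\Lambda(t)-\tilde{\Lambda}_{2}^{\mathrm{det}}(t/N)^{N}\|_{\diamond}\leq \frac{(t\lambda M)^{3}}{3N^{2}}\exp\left(\frac{t\lambda M}{N}\right)\leq \epsilon .
\end{align*}
This precision function is identical to the one for the First-Order Randomised TS-PF, so the resulting bound should coincide with that of the preceding proposition.

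The first step is to control the exponential factor. As in the first-order deterministic proof, we impose $t\lambda M/N \leq 1$, i.e. $N \geq t\lambda M$, so that the exponential is bounded by $e$. Under this constraint it is enough to require
\begin{align*}
\frac{e(t\lambda M)^{3}}{3N^{2}} \leq \epsilon .
\end{align*}
Since $N>0$, this is equivalent to $N^{2} \geq e(t\lambda M)^{3}/(3\epsilon)$. Taking positive square roots gives $N \geq \sqrt{e(t\lambda M)^{3}/(3\epsilon)}$.

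Finally, $N$ must be a natural number satisfying both $N\ge t\lambda M$ and the square-root bound. Taking the ceiling of their maximum gives $N_{2,\text{det}}^{\mathrm{analytic}}$ as stated. By construction both conditions then hold, so the chain of inequalities gives diamond-norm error at most $\epsilon$.

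There is no real obstacle. The only points needing care are the direction of the inequalities: the exponential factor is monotone increasing in $t\lambda M/N$, and squaring preserves order for positive quantities.
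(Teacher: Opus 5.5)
Your proposal is correct and matches the paper's approach: the paper omits this proof and says it follows the First-Order Deterministic argument, which is exactly your route of imposing $N \ge t\lambda M$ to bound the exponential by $e$ and then solving $e(t\lambda M)^{3}/(3N^{2}) \le \epsilon$ for $N$. As you note, the bound coincides with the First-Order Randomised one because the two precision functions in Table~\ref{tab:ts_formulas} are identical.
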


\begin{proposition}[Second-Order Randomised TS-PF]
Let $t \geq 0$ be the total evolution time, $N \in \mathbb{N}$ be the number of Trotter steps, and $\epsilon > 0$ be the chosen precision. Given the Second-Order Randomised TS-PF,
\begin{align}
\tilde{\Lambda}_2^{\mathrm{ran}}(\tau) = \frac{1}{M!} \sum_{\sigma \in \mathrm{Sym}(M)}\prod_{k=1}^M \mathrm{exp}(\frac{\tau}{2}\mathcal{L}_{\sigma(k)}) \prod_{k'=M}^1 \mathrm{exp}(\frac{\tau}{2}\mathcal{L}_{\sigma(k')})
\end{align}
where $\sigma$ is a permutation in $\mathrm{Sym}(M)$ and $\tau \geq 0$, the Second Order Randomised analytic bound on $N$ is given by
\begin{align}
N_{2,\text{ran}}^{\mathrm{analytic}} := \left\lceil \max \left\{ t \lambda M, \sqrt{\frac{e (t \lambda)^{3}M^{2}}{\epsilon}} \right\} \right\rceil.
\end{align}
\end{proposition}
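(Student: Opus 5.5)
I will follow the same template as the proof of the First-Order Deterministic proposition, now using the Second-Order Randomised row of Table \ref{tab:ts_formulas}. Fixing $\epsilon>0$, the guarantee from \cite{david2024faster} gives
\begin{align}
\|\Lambda(t)-\tilde{\Lambda}_2^{\mathrm{ran}}(t/N)^{N}\|_{\diamond}\leq \frac{(t\lambda)^{3}M^{2}}{N^{2}}\exp\left(\frac{t\lambda M}{N}\right),
\end{align}
and, by (\ref{epsilon_func_and_epsilon}), it suffices to choose $N$ so that the right-hand side is at most $\epsilon$. I will take this precision function as given and not re-derive it.

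The first step controls the exponential factor. I will impose $t\lambda M/N\le 1$, which holds whenever $N\ge t\lambda M$. Since $x\mapsto e^{x}$ is increasing, the exponential is then at most $e$. The precision function is therefore bounded by $e(t\lambda)^{3}M^{2}/N^{2}$.

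The second step asks that this simpler bound be at most $\epsilon$. Rearranging gives $N^{2}\ge e(t\lambda)^{3}M^{2}/\epsilon$. Both sides are nonnegative, so taking square roots gives the equivalent condition $N\ge\sqrt{e(t\lambda)^{3}M^{2}/\epsilon}$.

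The final step combines the two conditions. Any $N$ satisfying both gives total error at most $\epsilon$, and $N$ must be a natural number. Taking the ceiling of the maximum of the two lower bounds then yields $N_{2,\text{ran}}^{\mathrm{analytic}}$ exactly as stated.

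No step here is a genuine obstacle; the argument is the same bookkeeping as in the first proposition. The one point needing care is that $\max$ and $\lceil\cdot\rceil$ only increase $N$, and both constraints are monotone in $N$. Hence the resulting integer satisfies $N\ge t\lambda M$ and the squared inequality simultaneously, which justifies replacing the exponential by $e$. A degenerate case such as $t=0$ would make the ceiling $0$, so one should read the bound as $N\ge1$ in that case; there the error is trivially zero.
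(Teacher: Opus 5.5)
Your proposal is correct and is essentially the argument the paper intends. The paper omits this proof, saying it follows the First-Order Deterministic template: impose $t\lambda M/N\le 1$ to bound the exponential by $e$, then solve $e(t\lambda)^{3}M^{2}/N^{2}\le\epsilon$ for $N$ and take the ceiling of the maximum of the two lower bounds. Your remark that $t=0$ gives a ceiling of $0\notin\mathbb{N}$ is a small edge case the paper does not address.
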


\section{Formulation of the Search Problem}\label{sec_binary_search}

The total native gate count required by a specific method to simulate a given model to a target precision $\epsilon$ depends on three parameters: (i) the number of Trotter steps $N$, (ii) the number of exponentials per step in $\tilde{\Lambda}(\tau)$ and (iii) the native gates required per exponential. While the second parameter is specified by the simulation method, the third is method-independent, depending instead on the model's Liouvillian terms and the chosen dilation \cite{stinespring1955positive, szHokefalvi1954contractions} and decomposition \cite{dawson2005solovay, li2013decomposition} schemes. One way the native gate count can be minimised is through the reduction of the number of Trotter steps $N$. In this work, we define gate complexity using only the first two parameters. This approach allows us to quantify the reduction in gate count achieved by optimising $N$ for a given model, while ensuring a fair comparison of the relative gate efficiency across different simulation methods for that same model.

Having established that gate complexity can be minimized through $N$, we now describe the procedure used to identify its optimal value for a given model, simulation method and target precision $\epsilon$. This minimum $N$ is located via a binary search \cite{knuth1997art} within the interval $[1, N_{\mathrm{upper}}]$. The upper bound $N_{\mathrm{upper}}$ is determined by initially setting $N = 1$ and doubling it until $\hat\epsilon(t, \lambda, N, M) \leq \epsilon$.

The binary search algorithm works as follows: we begin by evaluating the precision function $\hat\epsilon(t,\lambda,N,M)$ at the middle value of $N$ in $[1,N_{\mathrm{upper}}]$. If the resulting precision is not equal to the target precision, we ignore half the values of $N$ in $[1,N_{\mathrm{upper}}]$ where the minimum $N$ cannot lie. The search continues on the remaining half, with the precision function evaluated at the middle value of the new interval $[N_{\mathrm{lower}}, N_{\mathrm{upper}}]$. This process repeats until the smallest value of $N$ that achieves the precision closest to the target precision is found.

A detailed implementation of the procedure for finding the initial $N_{\mathrm{upper}}$ and the binary search algorithm is provided in Algorithm~\ref{binary_search} (see Appendix~A).

By using binary search, we can find the minimum $N$ in $[1, N_{\mathrm{upper}}]$ with, on average, $\mathcal{O}(\log n)$ iterations, where $n$ is the initial number of values in $[1, N_{\mathrm{upper}}]$. Furthermore, since the precision function $\hat\epsilon(t, \lambda, N, M)$ is only evaluated as needed, the number of evaluations performed is also $\mathcal{O}(\log n)$ on average.

\section{Results and Discussion}\label{sec4}

In this section, we present the two models that will be used in our empirical study: an XX-Spin Chain with Boundary Driving and Local Dephasing \cite{vznidarivc2015relaxation}, and a Transverse Field Ising Model with Local Dephasing \cite{hashizume2022dynamical}. We first detail each system's Hamiltonian and dissipative dynamics, including their respective Liouvillian generators. Next, we outline the methodology for computing the diamond norm of the generator terms, leveraging a bound to circumvent the need to solve optimization problems \cite{nechita2018almost}. Finally, we present and discuss the empirical and analytic bounds on the number of Trotter steps required by the four simulation methods outlined in section \ref{sec2}. Estimated using the empirical and analytic bounds, the gate complexities are also presented and discussed. 

\subsection{XX-Spin Chain with Boundary Driving and Local Dissipation}

This model consists of spins arranged on a chain, with the boundaries coupled to two heat baths and local dephasing occurring at each site. The following Hamiltonian governs the system:

\begin{align}
H=\sum_{j=1}^{P-1}X_{j}X_{j+1}+Y_{j}Y_{j+1},
\end{align}

\noindent where $X_{j}$ and $Y_{j}$ are the Pauli matrices acting on the $j$-th spin and $P$ is the number of spins. This model has two types of dissipative terms: boundary terms, which account for the driving, and local dephasing terms for each spin. The jump operators for the boundary driving terms are:

\begin{align}
L_{1}=\sqrt{\frac{\Omega}{2}} \ \sigma_{1}^{+}, && L_{2}=\sqrt{\frac{\Omega}{2}} \ \sigma_{1}^{-}, && L_{3}=\sqrt{\frac{\Omega}{2}} \ \sigma_{P}^{+}, && L_{4}=\sqrt{\frac{\Omega}{2}} \ \sigma_{P}^{-},
\end{align}

\noindent where $\sigma_{k}^{\pm}=(X_{k}\pm iY_{k})/2$ for $k=1,...,P$ and the parameter $\Omega$ is the coupling strength of the heat baths at the boundaries. $\Omega$ has a typical value in the interval $[0.1, 10]$. For our study, we choose $\Omega=3.94$. The jump operators for the local dephasing terms are:

\begin{align}
L_{j}^{\text{deph}}=\sqrt{\frac{\gamma}{2}}Z_{j}
\end{align}

\noindent where $\gamma$ is the dephasing strength and should be in the range $[0,1]$. In this work, we use $\gamma=0.31$. 

\noindent The Liouvillian generator for this model has the form:

\begin{align}
\mathcal{L}(\rho)=&\sum_{j=1}^{P-1}-i\left[X_{j}X_{j+1},\rho \right]+\sum_{j=1}^{P-1}-i\left[Y_{j}Y_{j+1},\rho \right]\nonumber\\
+\sum_{j=1}^{4}&\left(2L_{j}\rho L_{j}^{\dagger}-\{L_{j}^{\dagger}L_{j},\rho\}\right)+\gamma 
\sum_{k=1}^{P}\left(Z_{k}\rho Z_{k}-\rho\right).
\end{align}

\noindent We observe that the number of the terms in this generator is $M=2P+3$, where we group the $-i[X_{j}X_{j+1},\rho]$ and $-i[Y_{j}Y_{j+1},\rho]$ commutator terms at each site $j$ into a single Liouvillian term, giving $(P-1)$ combined coupling terms, $4$ boundary driving terms, and $P$ local dephasing terms. 

\subsection{Transverse Field Ising Model with Local Dephasing}

This model has spins on a lattice with neighbour-neighbour coupling and local dephasing at each spin site. We consider finite TFIM instances with $n=2,3,4,5,$ and $6$ spins. Each instance is specified by an interaction graph $G_n=(V_n,E_n)$, where the vertices represent spins and the edges represent nearest-neighbour $ZZ$ interactions. The interaction graphs are shown in Figure \ref{fig:ising-lattices}. For $n=2$ we use a single bond, for $n=3$ a three-site chain, for $n=4$ a $2\times 2$ square, for $n=5$ a five-site ring, and for $n=6$ a $2\times 3$ rectangular lattice. The Hamiltonian for this model is

\begin{align}
    H= -J\sum_{\langle i,j\rangle}Z_{i}Z_{j}-h\sum_{j}X_{j},
\end{align}

\noindent where $J$ is the coupling strength between neighbouring spins and $h$ is the transverse field strength for a field along the $x-$direction. The $\langle i,j\rangle$ represents summation over nearest-neighbour interaction pairs, and it ensures we do not double count. In this work, we use $J=1$ and $h=0.5$. The jump operator for the local dephasing is 

\begin{align}
    L_{j}=\sqrt{\gamma} \ Z_{j},
\end{align}

\noindent where $\gamma$ is the dephasing strength and we use $\gamma =0.1$. 

\noindent The Liouvillian generator for this model is

\begin{align}
    \mathcal{L}(\rho) = iJ\sum_{\langle i,j \rangle}[Z_{i}Z_{j},\rho]+ih\sum_{j}[X_{j},\rho]+\gamma\sum_{j}(Z_{j}\rho Z_{j}-\rho).
\end{align}

\noindent The number of Liouvillian terms is determined by the number of edges in the interaction graph. Each edge contributes one two-site $ZZ$ commutator term, while each spin contributes one transverse-field term and one local dephasing term. Therefore, $M=|E_n|+2n$. For the five TFIM configurations used here, this gives $M=5,8,12,15,$ and $19$ for $n=2,3,4,5,$ and $6$, respectively.

\begin{figure}[h!]
    \centering
    \includegraphics[width=\linewidth]{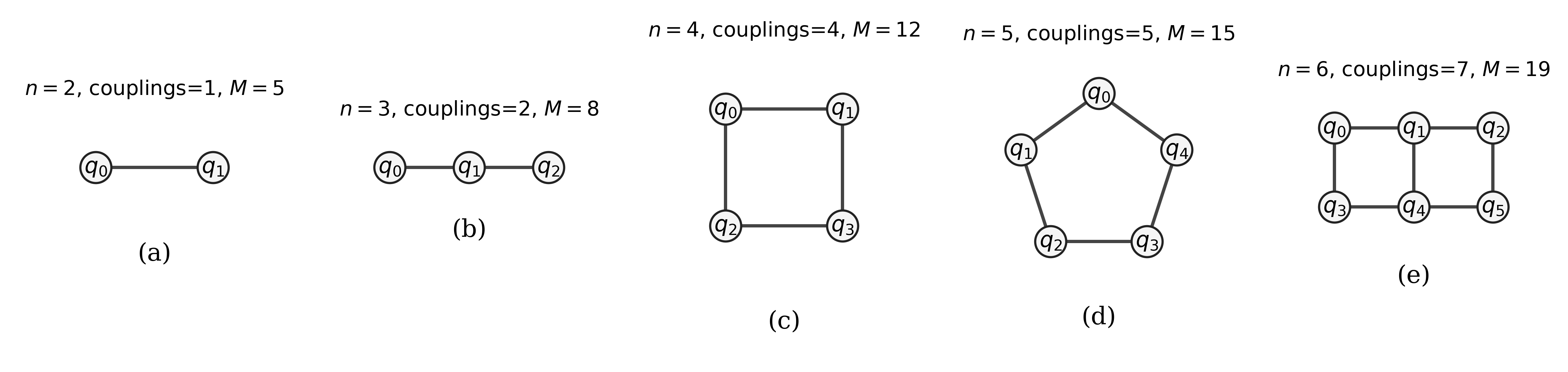}
    \caption{The TFIM interaction graphs used for $n=2,3,4,5,$ and $6$ spins. Nodes represent spins and edges represent nearest-neighbour $ZZ$ interactions.}
    \label{fig:ising-lattices}
\end{figure}

\subsection{Computing the Diamond Norm of the Generator}
For both models, we need to compute $\lambda = \max_{k}\{\|\mathcal{L}_{k}\|_{\diamond}\}$. This involves computing the diamond norm of each term in the Liouvillian generator $\mathcal{L}$. Typically, the diamond norm is estimated by solving an optimization problem. However, this becomes more computationally intensive as the size of the system increases. To circumvent this, we use a bound on the diamond norm derived in \cite{nechita2018almost}. Under certain conditions, this bound is equivalent to the diamond norm itself. The bound is stated and proven in proposition 1 of \cite{nechita2018almost} but we state it here for convenience. Given a linear map $\Phi : \mathcal{M}_{d}(\mathbb{C}) \rightarrow \mathcal{M}_{d}(\mathbb{C})$, we have

\begin{align}
    \|\Phi\|_{\diamond}\leq \frac{\|\mathrm{tr}_{2}\sqrt{J(\Phi)^{\dagger}J(\Phi)}\|_{\infty}+\|\mathrm{tr}_{2}\sqrt{J(\Phi)J(\Phi)^{\dagger}}\|_{\infty}}{2},
\end{align}

\noindent where $J(\Phi)$ is the Choi-Jamiolkowski matrix \cite{choi1975completely}, $\mathrm{tr}_{2}$ represents the partial trace over the second tensor factor and $\|\cdot\|_{\infty}$ is the infinity norm for matrices. We see that this inequality has equality if $\mathrm{tr}_{2}\sqrt{J(\Phi)^{\dagger}J(\Phi)}$ and $\mathrm{tr}_{2}\sqrt{J(\Phi)J(\Phi)^{\dagger}}$ are scalar. 

While the Choi-Jamiolkowski construction scales exponentially with total system size $P$ — requiring a $4^P \times 4^P$ matrix representation \cite{choi1975completely} — this overhead is circumvented in the present work by exploiting the locality of the Liouvillian generator. Because our models consist of terms $\mathcal{L}_{k}$ acting on at most $k=2$ qubits, each term is effectively a local superoperator tensored with the identity. Since the diamond norm is stable under tensoring with identity maps, we compute the norm only on the local support. Consequently, the Choi-Jamiolkowski matrices utilised here remain of fixed dimension $(4^k \times 4^k)$, and the total classical overhead scales linearly with the number of terms rather than the Hilbert-space dimension. This approach is consistent with established methods for efficient Markovian dynamics simulation \cite{BarthelKliesch2012}. While a simulation involving genuinely non-local terms (where $k \sim P$) would require analytic norm bounds or structure-exploiting approximations, the explicit Choi-Jamiolkowski construction provides an exact and efficient metric for the local models considered in this work.

Therefore, for each $\mathcal{L}_{k}$, we can compute this bound and check if the matrices $\mathrm{tr}_{2}\sqrt{J(\mathcal{L}_{k})^{\dagger}J(\mathcal{L}_{k})}$ and $\mathrm{tr}_{2}\sqrt{J(\mathcal{L}_{k})J(\mathcal{L}_{k})^{\dagger}}$ are scalar. If the matrices are scalar, we have obtained the exact diamond norm of $\mathcal{L}_{k}$. Otherwise, we have obtained an upper bound for the diamond norm of $\mathcal{L}_{k}$. In this way, we are able to compute or bound $\lambda$. 

\subsection{Comparison of Analytic Bounds and Empirical Bounds}

In this section, we compare the empirical and analytic bounds on the number of Trotter steps required by each simulation method for each model. We also compare the gate complexities estimated for the methods from the number of Trotter steps using the formulas in Table \ref{tab:ts_formulas}.

\begin{figure}[h!]
    \centering
    \includegraphics[width=\linewidth]{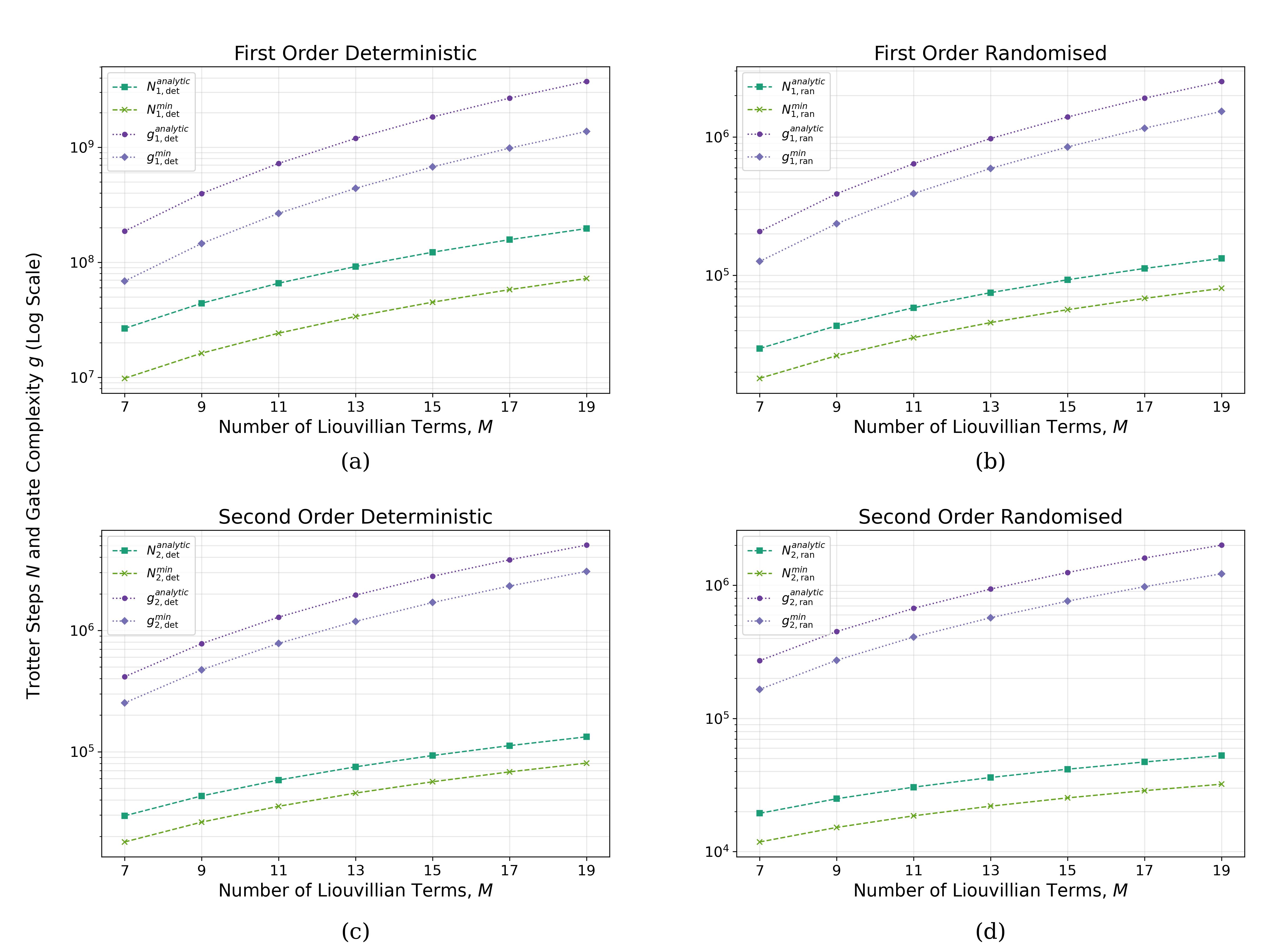}
    \caption{The number of Trotter steps, $N$, and gate complexities, $g$, required by each method for the XX-Spin Chain model with varying $M$, the number of terms in the model's generator. $N^{\mathrm{analytic}}$ for each method is obtained from the analytic bounds derived in Section \ref{sec_analytic_bounds} and $g^{\mathrm{analytic}}$ is obtained using these values of $N$ and the formulas in Table \ref{tab:ts_formulas}. $N^{\mathrm{min}}$ for each method is obtained from the binary search outlined in Section \ref{sec_binary_search} and $g^{\mathrm{min}}$ is obtained using these values of $N$ and the formulas in Table \ref{tab:ts_formulas}. (a)-(b) shows $N^{\mathrm{analytic}}$, $N^{\mathrm{min}}$, $g^{\mathrm{analytic}}$ and $g^{\mathrm{min}}$ for  the First Order Deterministic and Randomised Trotter-Suzuki Product Formulas. (c)-(d) shows $N^{\mathrm{analytic}}$, $N^{\mathrm{min}}$, $g^{\mathrm{analytic}}$ and $g^{\mathrm{min}}$ for the Second Order Deterministic and Randomised Trotter-Suzuki Product Formulas. All $y$-axes are displayed on a logarithmic scale.}
    \label{fig:XX-spin-chain}
\end{figure}

Fig. \ref{fig:XX-spin-chain} and Fig. \ref{fig:TFIM-2D} show both the number of Trotter steps and the gate complexity required by each simulation method for the XX-Spin Chain Model and the Transverse Field Ising Model, respectively. For the XX-Spin Chain Model, we found $\lambda=7.071$ and we used $\epsilon=10^{-3}$, $t=2$. For the Transverse Field Ising Model, we found $\lambda=8.00$ (the same for all $n$) and we used $\epsilon=10^{-5}$, $t=5$. As expected, as $M$ increases for both models, the Randomised Product Formulas require fewer Trotter steps than the Deterministic Product Formulas. This indicates that the Randomised Product Formulas perform better as $M$ increases, that is, for larger systems. 

We also observe that, for both models and for every method, the empirical bounds obtained for $N$ are significantly lower than the analytic bounds. The empirical bounds reveal that fewer Trotter steps are needed to obtain the same precision as the analytic bound estimates. 

\begin{figure}[h!]
    \centering
    \includegraphics[width=\linewidth]{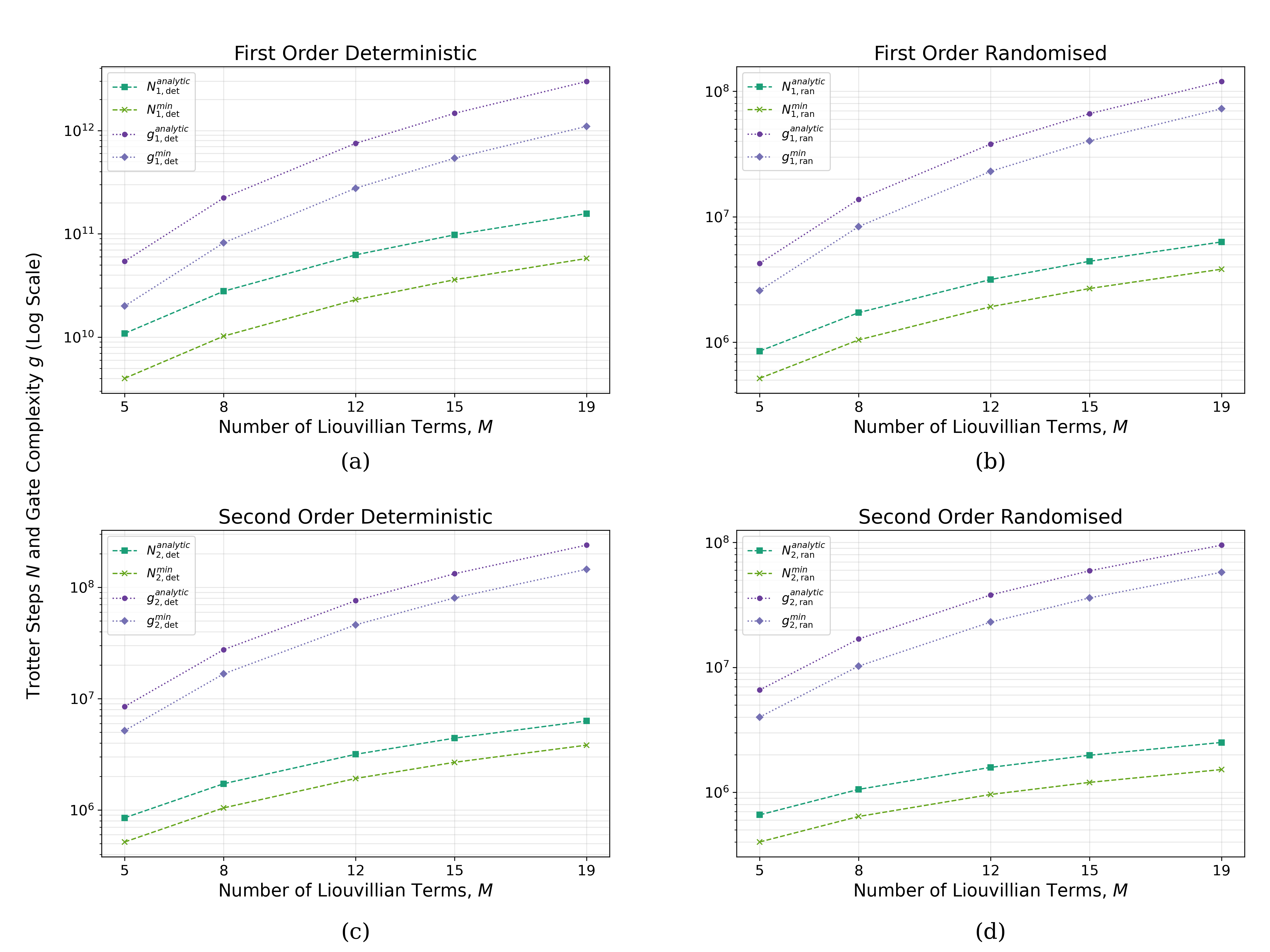}
    \caption{The number of Trotter steps, $N$, and gate complexities, $g$, required by each method for the Transverse Field Ising model with varying $M$, the number of terms in the model's generator. For this model, $M=5$ corresponds to $n=2$ spins, $M=8$ corresponds to $n=3$ spins, $M=12$ corresponds to $n=4$ spins, $M=15$ corresponds to $n=5$ spins and $M=19$ corresponds to $n=6$ spins. $N^{\mathrm{analytic}}$ for each method is obtained from the analytic bounds derived in Section \ref{sec_analytic_bounds} and $g^{\mathrm{analytic}}$ is obtained using these values of $N$ and the formulas in Table \ref{tab:ts_formulas}. $N^{\mathrm{min}}$ for each method is obtained from the binary search outlined in Section \ref{sec_binary_search} and $g^{\mathrm{min}}$ is obtained using these values of $N$ and the formulas in Table \ref{tab:ts_formulas}. (a)-(b) shows $N^{\mathrm{analytic}}$, $N^{\mathrm{min}}$, $g^{\mathrm{analytic}}$ and $g^{\mathrm{min}}$ for the First Order Deterministic and Randomised Trotter-Suzuki Product Formulas. (c)-(d) shows $N^{\mathrm{analytic}}$, $N^{\mathrm{min}}$, $g^{\mathrm{analytic}}$ and $g^{\mathrm{min}}$ for the Second Order Deterministic and Randomised Trotter-Suzuki Product Formulas. All $y$-axes are displayed on a logarithmic scale.}
    \label{fig:TFIM-2D}
\end{figure}
 
For example, for the XX-Spin Chain Model and for $M=15$, we see that the difference in $N$ between the empirical and analytic bound for the First-Order Deterministic TS-PF is of the order $10^{7}$. For the same model for $M=15$, the difference between the analytic and empirical gate complexity is of order $10^{9}$. If we look at the Transverse Field Ising Model, we see similar trends in the gate complexities. For this model for $M=19$, the difference between the analytic and the empirical gate complexity for the First-Order Deterministic Formula is approximately $1.89\times 10^{12}$. 

It is also important to note that, although the First-Order Randomised TS-PF and the Second-Order Deterministic TS-PF require the same number of Trotter steps to achieve a given target precision, their gate complexities differ significantly. Specifically, the gate complexity of the Second-Order Deterministic TS-PF is 2MN, while that of the First-Order Randomised TS-PF is MN, where M is the number of terms in the Liouvillian and N is the number of Trotter steps. As a result, the First-Order Randomised TS-PF uses half as many gates as the Second-Order Deterministic TS-PF to reach the same target precision.

Lastly, for both models, we see that the Second-Order Randomised Formula requires the fewest number of Trotter steps and therefore has the best gate complexity even as the system size increases. Both the analytic and empirical bounds indicate this.

Regarding physical resource requirements and native gate complexity, we note that Randomised TS-PF can be implemented via either classical random sampling or quantum forking \cite{david2024faster}. In the classical sampling regime, where a permutation is drawn classically
before each Trotter step, no ancilla qubits or control-select logic are required, and the gate
complexity advantage reported here is fully realised on hardware. If quantum forking is used instead,
the additional ancilla overhead may partially offset the reduction in $N$; the extent to which the
advantage persists in this regime is hardware-dependent and left for future investigation.

\section{Conclusion}\label{sec5}

In this work, we formulate analytic bounds for the First- and Second-Order Deterministic and Randomised Trotter-Suzuki Product Formulas (TS-PF),based on \cite{sweke2015universal, david2024faster}, that directly link the number of Trotter steps N to model parameters, total evolution time, and target precision. We complemented these results with empirical bounds—obtained via a binary search—to estimate $N$ for simulating two representative open quantum systems. Our findings show that the empirical bounds substantially tighten the analytic ones, indicating that far fewer Trotter steps are needed than the analytic estimates suggest. This reduction in $N$ directly translates into lower gate complexity and hence improves the overall efficiency of quantum simulations in practice. Furthermore, we observed that the Randomised TS-PF generally outperforms the Deterministic approaches for larger system sizes, primarily due to more favourable scaling with the number of generator terms. Among all the methods evaluated, the Second-Order Randomised TS-PF achieved the smallest N and, consequently, the lowest gate complexity. These results underscore the value of empirical bounding methods in quantum simulation. Future work could focus on deriving tighter analytic bounds that better align with empirical observations and on extending comparative studies to other quantum algorithms used for open-system simulations \cite{cleve2016efficient,childs2017efficient,li2022simulating,pocrnic2023quantum}, ultimately enabling more accurate and resource-efficient quantum simulations.
\newpage

\section*{Declarations}

\begin{itemize}
\item Funding:
This work is based upon research supported by the
National Research Foundation of the Republic of South
Africa. 
\item Conflict of interest/Competing interests: SMP is the CEO and Co-Founder of Fraq-
tal Technologies. IJD is the COO and Co-Founder of Fraqtal Technologies.
Francesco Petruccione is the Chair of the Scientific Board and Co-Founder of
QUNOVA computing. The authors declare no other competing interests.

\item Code Availability:
The codes used to generate all the data for this research are publicly available at \url{https://github.com/fraqtal-technologies/optimising-TSPF-for-OQS-via-binary-search}.

\item Authors' contributions:
The project was conceptualized by S.M.P., I.J.D., and I.S. S.M.P. and I.J.D. developed and implemented the numerical code and prepared the initial draft of the manuscript. All Authors discussed the results and reviewed the manuscript.
\end{itemize}

\appendix

\section{Binary Search Algorithm}
\begin{algorithm}[h!]
\caption{Binary Search for Minimum $N$ that Achieves Target Precision $\epsilon$}
\label{binary_search}
\begin{algorithmic}
\State $N_{\mathrm{lower}} \gets 1$
\State $N_{\mathrm{upper}} \gets 1$

\While{$\hat{\epsilon}(t, \lambda, M, N_{\mathrm{upper}}) > \epsilon$}
    \State $N_{\mathrm{upper}} \gets 2N_{\mathrm{upper}}$
\EndWhile

\State $N_{\mathrm{mid}} \gets \big(N_{\mathrm{upper}}+N_{\mathrm{lower}}\big)//2$
\State $\hat{\epsilon} \gets \hat{\epsilon}(t, \lambda, M, N_{\mathrm{mid}})$

\While{ $N_{\mathrm{lower}} < N_{\mathrm{upper}}$ \textbf{and} $\hat{\epsilon} \ne \epsilon$}
    \If{$\hat{\epsilon} > \epsilon$}
        \State $N_{\mathrm{lower}} \gets N_{\mathrm{mid}} + 1$
        \State $N_{\mathrm{mid}} \gets \big(N_{\mathrm{upper}}+N_{\mathrm{lower}}\big)//2$
    \ElsIf {$\hat{\epsilon} < \epsilon$}
        \State $N_{\mathrm{upper}} \gets N_{\mathrm{mid}}-1$
        \State $N_{\mathrm{mid}} \gets \big(N_{\mathrm{upper}}+N_{\mathrm{lower}}\big)//2$
    \EndIf
    \State $\hat{\epsilon} \gets \hat{\epsilon}(t, \lambda, M, N_{\mathrm{mid}})$
\EndWhile
\State \Return $N_{\mathrm{mid}}$
\end{algorithmic}
\end{algorithm}

\bibliography{References}

\end{document}